\theoremstyle{plain}
\newtheorem{theorem}{Theorem}
\newtheorem{lemma}[theorem]{Lemma} 
\newtheorem{proposition}[theorem]{Proposition}
\newtheorem{definition}[theorem]{Definition}
\author[1]{Pawe\l{} Gawrychowski}
\author[2]{Shay Mozes}
\author[3]{Oren Weimann}
\affil[1]{
University of Wroc\l{}aw, Poland\\
\href{mailto:gawry@cs.uni.wroc.pl}{gawry@cs.uni.wroc.pl}}
\affil[2]{
The Interdisciplinary Center Herzliya, Israel\\
\href{mailto:smozes@idc.ac.il}{smozes@idc.ac.il}}
\affil[3]{
University of Haifa, Israel\\
\href{mailto:oren@cs.haifa.ac.il}{oren@cs.haifa.ac.il}
}
\date{}
\title{Minimum Cut in $O(m\log^2 n)$ Time}
\begin{document}

\maketitle

\thispagestyle{empty}

\begin{abstract}
We give a randomized algorithm that finds a minimum cut in an undirected weighted $m$-edge $n$-vertex graph $G$ with high probability in $O(m \log^2 n)$ time. This is the first improvement to Karger's celebrated $O(m \log^3 n)$ time algorithm from 1996. Our main technical contribution is a deterministic $O(m \log n)$ time algorithm that, given a spanning tree $T$ of $G$, finds a minimum cut of $G$ that 2-respects (cuts two edges of) $T$. 
\end{abstract}

\clearpage
\setcounter{page}{1}

\section{Introduction}
The minimum cut problem is one of the most fundamental and well-studied optimization problems in theoretical computer science. Given an undirected edge-weighted graph $G=(V,E)$, the problem asks to find a subset of vertices $S$ such that the total weight of all edges between $S$ and $V \setminus S$ is minimized. The vast literature on the minimum cut problem can be classified into three main approaches:

\medskip
\noindent
 {\bf The maximum-flow approach.}
The minimum cut problem was originally solved by computing the maximum $st$-flow~\cite{FF62} for all pairs of vertices $s$ and $t$. In 1961, Gomory and Hu~\cite{GomoryHu} showed that only $O(n)$ maximum $st$-flow computations are required, and in 1994 Hao and Orlin~\cite{HaoO94} showed that in fact a single maximum $st$-flow computation suffices. A maximum $st$-flow can be found in $O(mn\log(n^2/m))$ time using the Goldberg-Tarjan algorithm~\cite{GoldbergTarjan}, and the fastest algorithm to date takes $O(mn)$ time~\cite{Orlin,KingRT94}. 
Faster (though not near-linear time) algorithms are known (see e.g~\cite{GoldbergRao,LeeSidford,Madry16} and references within) when the graph is unweighted or when the maximum edge weight $W$ is not extremely large. 

\medskip
\noindent
{\bf The edge-contraction approach.}
An alternative method 
 is edge contraction.  If we can identify an edge that does not cross the minimum cut, then we can contract this edge without affecting the minimum cut. Nagamochi and Ibaraki~\cite{NI92a,NI92b} showed how to deterministically find a contractible edge in $O(m)$ time, leading to an $O(mn+n^2\log n)$-time minimum cut algorithm. 
Karger~\cite{Karger93} showed that randomly choosing the edge to contract works well with high probability. In particular, Karger and Stein~\cite{KargerStein} showed that this leads to an improved $O(n^2 \log^3 n)$ Monte Carlo algorithm.

\medskip
\noindent
{\bf The tree-packing approach.}
In 1961, Nash-Williams~\cite{NW61} proved that, in unweighted graphs, any graph with minimum cut $c$ contains a set of $c/2$ edge-disjoint spanning trees. Gabow's algorithm~\cite{Gabow} can be used to find such a tree-packing with $c/2$ trees in $O(mc\log n)$ time.
Karger~\cite{Karger} observed that the $c$ edges of a minimum cut must be partitioned among
these $c/2$ spanning trees, hence the minimum cut {\em 1-} or {\em 2-respects} some tree in the packing.
That is, one of the trees is such that at most two of its edges cross the minimum cut (these edges are said to {\em determine} the cut).
We can therefore find the minimum cut by examining each tree and finding the minimum cut that 1- or 2-respects it. 

Several obstacles need be overcome in order to translate this idea into an efficient minimum cut algorithm for weighted graphs:
(1) we need a weighted version of tree-packing,
(2) finding the packing (even in unweighted graphs) takes time proportional to $c$ (and $c$ may be large),
(3) checking all trees takes time proportional to $c$, and
(4) one needs an efficient algorithm that, given a spanning tree $T$ of $G$, finds the minimum cut in $G$ that 2-respects $T$ (finding a minimum cut that 1-respects $T$ can be easily done in $O(m+n)$ time, see e.g~\cite[Lemma 5.1]{Karger}).

In a seminal work, Karger~\cite{Karger} overcame all four obstacles: First, he converts $G$ into an unweighted graph by conceptually replacing an edge of weight $w$ by $w$ parallel edges. Then, he uses his random sampling from~\cite{Kar97a,Karger93} combined with Gabow's algorithm~\cite{Gabow} to reduce the packing time to $O(m + n \log^3 n)$ and the number of trees in the packing to $O(\log n)$. Finally, he designs a deterministic $O(m \log^2 n)$ time algorithm that given a spanning tree $T$ of $G$ finds the minimum cut in $G$ that 2-respects $T$. Together, this gives an $O(m \log^3 n)$ time randomized algorithm for minimum cut. Until the present work, this was the fastest known algorithm for undirected weighted graphs. 

Karger's $O(m \log^2 n)$ algorithm for the 2-respecting problem finds, for each edge $e\in T$, the edge $e'\in T$ that minimizes the cut determined by $\{e,e'\}$. He used link-cut trees~\cite{LinkCutTree} to efficiently keep track of the sizes of cuts as the candidate edges $e$ of $T$ are processed in a certain order (bough decomposition), consisting of $O(\log n)$ iterations, and guarantees that the number of dynamic tree operations is $O(m)$ per iteration. Since each link-cut tree operation takes $O(\log n)$ time, the total running time for solving the 2-respecting problem is $O(m \log^2 n)$.
  
In a very recent paper, Lovett and Sandlund~\cite{SimpleArxiv} proposed a simplified version of Karger's algorithm. Their algorithm  has the same $O(m\log^3 n)$ running time as Karger's. To solve the 2-respecting problem they use top trees~\cite{TopTrees} rather than link-cut trees, and 
use heavy path decomposition~\cite{HT1984,LinkCutTree} to guide the order in which edges of $T$ are processed. A property of heavy path decomposition is that, for every edge $(u,v) \notin T$, the $u$-to-$v$ path in $T$ intersects $O(\log n)$ paths of the decomposition. This property implies that the contribution of each non-tree edge to the cut changes $O(\log n)$ times along the the entire process. See also~\cite{GeissmannG18} who use the fact that the bough decomposition, implicitly used by Karger, also satisfies the above property. 
The idea of traversing a tree according to a heavy path decomposition, i.e., by first processing a smaller subtree and then processing the larger subtree has been used quite a few times in similar problems on trees. See e.g.,~\cite{BrodalFP01}.
While the ideas of Lovett and Sandlund~\cite{SimpleArxiv} do not improve on Karger's bound, their paper has drawn our attention to this problem. 

\subsection{Our result and techniques} In this paper, we present a deterministic $O(m \log n)$ time algorithm that, given a spanning tree $T$ of $G$, finds the minimum cut in $G$ that 2-respects $T$. Using Karger's framework, this implies an $O(m \log^2 n)$ time randomized algorithm for minimum cut in weighted graphs.

Like prior algorithms for this problem, our algorithm finds, for each edge $e\in T$ the edge $e' \in T$ that minimizes the cut determined by $\{e,e'\}$. The difficult case to handle is when $e$ and $e'$ are such that neither of them is an ancestor of the other. In Karger's solution, handling each edge $e=(u,v)\in T$ was done using amortized $O(d\log n)$ operations on Sleator and Tarjan's link-cut trees~\cite{LinkCutTree} where $d$ is the number of non-tree edges incident to $u$. Since operations on link-cut trees require $O(\log n)$ amortized time, the  time to handle all edges is $O(m\log^2 n)$ (implying an $O(m \log^3 n)$ time algorithm for the minimum cut problem). 
As an open problem, Karger~\cite{Karger} asked (more than 20 years ago) whether the required link-cut tree operations can be done in constant amortized time per operation (implying an $O(m \log^2 n)$ time algorithm for the minimum cut problem). Karger even pointed out why one could perhaps achieve this: ``{\em We are not using the full power of dynamic trees (in particular, the tree we are operating on is static, and the sequence of operations is known in advance)}.'' In this paper, we manage to achieve exactly that. 
We show how to order the link cut tree operations so that they can be handled efficiently in batches. 
We call such a batch a {\em bipartite problem} (see Definition~\ref{def:bipartite}). 

Perhaps a reason that the running time of Karger's algorithm has not been improved in more than two decades is that it is not at all apparent that these bipartite problems can indeed be solved more efficiently. Coming up with an efficient solution to the bipartite problem requires a combination of several additional ideas. 
Like~\cite{SimpleArxiv}, we use heavy path decomposition, but in a different way. We develop a new decomposition of a tree that combines heavy path decomposition with biased divide and conquer, and use this decomposition in conjunction with a compact representation which we call topologically induced subtrees (see Definition~\ref{def:induced}). This compact representation turns out to be crucial not only for solving the bipartite problem, but also to the reduction from the original problem to a collection of bipartite problems.

\subsection{Application to unweighted graphs}
Karger's method is inherently randomized and obtaining a deterministic (or at least Las Vegas) near-linear time algorithm for the minimum cut in undirected weighted graphs is an interesting open problem. For unweighted graphs, such a deterministic algorithm was provided by Kawarabayashi and Thorup~\cite{KawarabayashiT19}. Later, Henzinger, Rao, and Wang~\cite{HenzingerRW17} designed a faster $O(m\log^{2} n(\log\log n)^{2})$ time algorithm. Very recently Ghaffari, Nowicki and Thorup~\cite{GhaffariNT19} introduced a new technique of random 2-out contractions and applied it to design an $O(\min\{m+n\log^{3}n,m\log n\})$ time randomized algorithm that finds a minimum cut with high probability. We stress that the faster algorithms of Henzinger et al. and Ghaffari et al. work only for unweighted graphs, that is, for edge connectivity. Interestingly, the latter uses Karger's $O(m\log^{3}n)$ time algorithm as a black box, and by plugging in our faster method one immediately obtains an improved running time of $O(\min\{m+n\log^{2}n,m\log n\})$ for unweighted graphs.

\subsection{Independent work}
Independently to our work\footnote{To be accurate, their work appeared on arXiv one day after ours.}, Mukhopadhyay and Nanongkai~\cite{Danupon} came up with an $O(m\log n + n\log^4 n)$ time algorithm for finding a minimum 2-respecting cut.
While this improves Karger's bound for sufficiently dense graphs, it does not improve it for all graphs, and is randomized.
Our algorithm uses a different (deterministic and simple) approach and strictly dominates both Karger's and Mukhopadhyay and Nanongkai's running time for all graphs.
There are however benefits to the approach of~\cite{Danupon} in other settings.
Namely, they use it to obtain an algorithm that requires $\tilde O(n)$ cut queries to compute the min-cut, and a streaming
algorithm that requires $\tilde O(n)$ space and $O(\log n)$ passes to compute the min-cut.

\section{Preliminaries}

\subsection{Karger's algorithm}

At a high level, Karger's algorithm~\cite{Karger} has two main steps. The input is a weighted undirected graph $G$. The first step produces a set $\{T_1,\dots, T_s\}$ of $s=O(\log n)$ spanning trees such that, with high probability, the minimum cut of $G$ $1$- or $2$-respects at least one of them. The second step deterministically computes for each $T_i$ the minimum cut in $G$ that 1-respects $T_i$ and the minimum cut in $G$ that 2-respects $T_i$. The minimum cut in $G$ is the minimum among the cuts found in the second step.

Karger shows~\cite[Theorem~4.1]{Karger} that producing the trees  $\{T_1,\dots, T_s\}$ in the first step can be done in $O(m+n\log^3 n)$ time, and that finding the minimum 2-respecting cut for all the $T_i$'s in the second step can be done in $O(m\log^3 n)$ time.  We will show that each of the steps can be implemented in $O(m \log^2 n)$ time. Showing this for the second step is the main result of the paper, and is presented in Section~\ref{sec:main}. For the first step, we essentially use Karger's proof. Since the first step was not the bottleneck in Karger's paper, proving a bound of $O(m+n\log^3 n)$ was sufficient for his purposes. Karger's concluding remarks suggest that he knew that the first step could be implemented in $O(m\log^2 n)$ time. For completeness, we prove the $O(m\log^2 n)$ bound by slightly modifying Karger's arguments and addressing a few issues that were not important in his proof. Readers proficient with Karger's algorithm can safely skip the proof.

\begin{definition}[2-respecting and 2-constraining]
	Given a spanning tree $T$ and a cut $(S,\bar S)$, we say that the cut 2-respects $T$ and that $T$ 2-constrains the cut if at most 2 edges of $T$ cross the cut. 
\end{definition}

\begin{definition}[weighted tree packing]
Let $G$ be an unweighted undirected graph. Let $\mathcal T$ be a set of spanning trees of $G$, where each tree $T\in \mathcal T$ is assigned a weight $w(T)$. We say that the \emph{load} of an edge $e$ of $G$ (w.r.t. $\mathcal T$) is $\ell(e) = \sum_{T \in \mathcal T : e \in T} w(T)$. We say that $\mathcal T$ is a \emph{weighted tree packing} if no edge has load exceeding 1. The \emph{weight} of the packing $\mathcal T$ is $\tau = \sum_{T \in \mathcal T} w(T)$.	
\end{definition}

\begin{theorem}
	Given a weighted undirected graph $G$, in $O(m \log^2 n)$ time, we can construct a set $\mathcal T$ of $O(\log n)$ spanning trees such that, with high probability, the minimum cut $2$-respects at least one of the trees in $\mathcal T$.
\end{theorem}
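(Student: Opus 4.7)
The plan is to follow Karger's original three-stage recipe and tighten the time analysis at each stage, so that the overall cost becomes $O(m\log^2 n)$. The three stages are: (i)~compute a constant-factor approximation $\hat c$ of the minimum cut of $G$; (ii)~apply Karger's random sampling theorem to build a multigraph $H$ whose cuts approximate those of $G$ up to a uniform $(1\pm\varepsilon)$ factor and whose minimum cut is $\Theta(\log n)$; and (iii)~run Gabow's maximum tree-packing algorithm on $H$ to produce a set $\mathcal T$ of $\Theta(\log n)$ edge-disjoint spanning trees, which we output.

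For stage~(i) I would use Matula's $O(m)$-time $(2+\varepsilon)$-approximation of the minimum cut. For stage~(ii), I would conceptually replace each edge of weight $w_e$ by $w_e$ parallel unit edges and retain each independently with probability $p=\Theta(\log n/\hat c)$; $H$ is then stored compactly as an integer-weighted graph on $E(G)$ with weights $X_e$ drawn from the binomial distribution with parameters $(w_e,p)$, which can be sampled in $O(m)$ total time. Karger's cut-preservation theorem then guarantees that, with high probability, every cut of $H$ has weight within a $(1\pm\varepsilon)$ factor of $p$ times the weight of the corresponding cut of $G$; in particular, the minimum cut of $H$ is $c_H=\Theta(\log n)$, and the image of any minimum cut of $G$ is crossed by at most $(1+\varepsilon)c_H$ edges of $H$.

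Correctness of the output $\mathcal T$ then follows from a standard Nash--Williams pigeonhole argument: Gabow returns a packing of $\lfloor c_H/2\rfloor$ edge-disjoint spanning trees of $H$, and distributing at most $(1+\varepsilon)c_H$ cut edges among these trees forces some tree to contain at most two of them once $\varepsilon$ is a sufficiently small constant; that tree 2-respects the minimum cut of $G$.

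The main obstacle is controlling the running time of stage~(iii). Gabow's algorithm runs in $O(\hat m\,c_H\log n)$ time on the unweighted expansion of $H$ of size $\hat m=\sum_e X_e$, and the expected value of $\hat m$ is $p\sum_e w_e$, which need not be $O(m)$ for arbitrary edge weights. I would address this along Karger's lines: precede the sampling by Nagamochi--Ibaraki sparsification with parameter $\hat c$, which in $O(m)$ time produces a subgraph with $O(n\hat c)$ edges that preserves every cut of value at most $\hat c$ (and in particular the minimum cut), and truncate each edge weight at $\hat c$ (which leaves the set of minimum cuts unchanged). A Chernoff bound then controls $\hat m$ with high probability, and a short case analysis on the regime of $\hat c$ (running Gabow directly on the sparsified, truncated graph when $\hat c$ is small) yields the target $O(m\log^2 n)$ bound. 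This bookkeeping is precisely the ``slight modification'' of Karger's arguments that the authors refer to, and I do not expect any conceptual difficulty beyond what Karger himself pointed out in his concluding remarks.
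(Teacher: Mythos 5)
Your plan reproduces Karger's original first step, not the paper's tightened version, and as a result it does not reach the claimed $O(m\log^2 n)$ bound. The crux is stage~(iii). Gabow's algorithm on the sampled multigraph $H$ costs $\Theta(\hat m\, c_H\log n)$, and even after Nagamochi--Ibaraki sparsification and weight truncation you have $\hat m=\Theta(n\log n)$ and $c_H=\Theta(\log n)$, so Gabow alone costs $\Theta(n\log^3 n)$. That gives the overall bound $O(m+n\log^3 n)$ --- exactly Karger's Theorem~4.1 --- which exceeds $O(m\log^2 n)$ whenever $m=o(n\log n)$, i.e.\ for sparse graphs. The case analysis you gesture at (``run Gabow directly when $\hat c$ is small'') does not close this: when $\hat c=O(\log n)$ the sparsified graph already has $\Theta(n\hat c)$ edges and min cut $\hat c$, so Gabow is again $\Theta(n\hat c^2\log n)=\Theta(n\log^3 n)$.

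The paper avoids Gabow entirely. In place of an integral tree packing it computes a \emph{fractional} (weighted) packing via the Plotkin--Shmoys--Tardos/Young Lagrangian scheme in the form specialized by Thorup and Karger: repeatedly compute a minimum spanning tree of $H$ with respect to current edge loads, add a fixed increment $1/(96\ln m')$ to the weight of that tree and to the load of each of its edges, and stop once some load reaches $1$. Since each iteration raises the packing weight by a fixed amount and the final weight is at most $c'=\Theta(\log n)$, there are only $O(\log^2 n)$ iterations. The crucial observation that makes each iteration cost only $O(m)$ (rather than $O(m')=O(m\log n)$) is that $H$ is a multigraph whose underlying simple graph is a subgraph of $G$: within each class of parallel edges, the load increases by the same fixed amount when chosen, so a cyclic pointer suffices to track the minimum-load representative in $O(1)$ per update, and the MST can be computed on a simple graph of $O(m)$ edges with the randomized linear-time MST algorithm. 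This yields $O(m\log^2 n)$ for the packing, and then Karger's Lemma~2.3 (at least a $1/8$ weight fraction of trees $2$-constrain the near-minimum cut) lets you pick $O(\log n)$ trees at random from the fractional packing by weight. You would need to substitute this Lagrangian-packing step for Gabow to actually prove the stated bound.

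Two smaller points. First, Matula's algorithm is $O(m/\varepsilon)$ only in the unweighted case; for weighted graphs one must iterate $O(\log W)$ times and then rescale, which the paper does in the appendix, obtaining $O(m\log^2 n)$ --- acceptable but worth stating. Second, sampling each conceptual unit edge independently (binomial weights $X_e$) versus drawing $\lceil p\tilde m\rceil$ edges with replacement are both fine for Karger's cut-preservation lemma; this part of your plan is sound, though you should also note, as the paper does, that contracting edges heavier than the estimate $\hat c$ keeps the total weight $O(m\hat c)$ so that the conceptual expansion is never materialized.
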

\begin{proof}
Let $(S,\bar S)$ be the partition of the vertices of $G$ that forms a minimum cut, and let $c$ be the weight of the minimum cut in $G$. 
We precompute a constant factor approximation of $c$ in $O(m \log^2 n)$ time using Matula's algorithm~\cite{Matula}. See Appendix~\ref{app:Matula} for details.

We assume that all weights are integers, each fitting in a single memory word.
Since edges with weight greater than $c$ never cross the minimum cut, we contract all edges with weight greater than our estimate for $c$, so that now the total weight of edges of $G$ is $O(mc)$.

For the sake of presentation we think of an unweighted graph $\tilde G$, obtained from $G$ by replacing an edge of weight $w$ by $w$ parallel edges. We stress that $\tilde G$ is never actually constructed by the algorithm.
Let $\tilde m$ denote the number of edges of $\tilde G$. By the argument above, $\tilde m = O(mc)$.  
Let $p=\Theta(\log n/c)$. 
Let $H$ be the unweighted multigraph obtained by sampling $\lceil p\tilde m \rceil$ edges of $\tilde G$ ($H=\tilde G$ if $c<\log n$). Clearly, the expected value of every cut in $H$ is $p$ times the value of the same cut in $G$.
By~\cite[Lemma 5.1]{Kar97a}, choosing the appropriate constants in the sampling probability $p$ guarantees that, with high probability, the value of every cut in $H$ is at least $64/65$ times its expected value, and no more than $66/65$ times its expected value. It follows that, with high probability, (i) the minimum cut in $H$ has value $c' = \Theta(\log n)$, and that (ii) the value of the cut in $H$ defined by $(S,\bar S)$ is at most $33c'/32$.

The conceptual process for constructing $H$ can be carried out by randomly selecting $\lceil p\tilde m \rceil$ edges of $G$ (with replacement) with probability proportional to their weights. Since the total edge weight of $G$ is $O(mc)$, each selection can be easily performed in $O(\log(mc))$ time. Using a standard technique~\cite{KY76}, each selection can actually be done with high probability in $O(\log m)$ time. Thus, the time to construct $H$ is $O(p\tilde m \log m) = O(m\log^2 n)$.
We emphasize that $H$ is an unweighted multigraph with $m' = O(m \log n)$ edges, and note that we can assume no edge of $H$ has multiplicity greater than $c'$ (we can just delete extra copies).

Next, we apply the following specialized instantiation~\cite[Theorem 2]{Thorup00} of Young's variant~\cite{Young95} of the Lagrangian packing technique of Plotkin, Shmoys, and Tardos~\cite{PlotkinST95}. It is shown~\cite{Thorup00,Young95} that  for an unweighted graph $H$ with $m'$ edges and minimum cut of size $c'$, the following algorithm finds a weighted tree packing  of weight $3c'/8 \leq \tau \leq c'$. 

\begin{algorithmic}[1]
\STATE $\ell(e) := 0$ for all $e\in E(H)$
\WHILE{there is no $e$ with $\ell(e) \geq 1$}
   \STATE find a minimum spanning tree $T$ w.r.t. $\ell(\cdot)$
   \STATE $w(T) = w(T) + 1/(96\ln m')$
   \STATE $\ell(e)  = \ell(e) + 1/(96\ln m')$ for all $e\in T$
\ENDWHILE	
\end{algorithmic}

Karger~\cite[Lemma 2.3]{Karger} proves that for a graph $H$ with minimum cut $c'$, any tree packing of weight at least $3c'/8$, and any cut $(S,\bar S)$ of $H$ of value at most $33c'/32$, at least a $1/8$ fraction of the trees (by weight) 2-constrain the cut $(S,\bar S)$. Thus, a tree chosen at random from the packing according to the weights 2-constrains the cut $(S,\bar S)$ with probability at least 1/8. Choosing $O(\log n)$ trees guarantees that, with high probability, one of them 2-constrains the cut $(S,\bar S)$, which is the minimum cut in $G$.

It remains to bound the running time of the packing algorithm. Observe 
 that the algorithm increases the weight of some tree by $1/(96\ln m')$ at each iteration. Since the weight $\tau$ of the resulting packing is bounded by $c'$, there are at most $96 c' \ln m' = O(\log^2 n)$ iterations. The bottleneck in each iteration is the time to compute a minimum spanning tree in $H$. We argue that this can be done in $O(m)$ time even though $m' = O(n\log n)$. To see this, first note that since $H$ is a subgraph of $G$, $H$ has at most $m$ edges (ignoring multiplicities of parallel edges). Next note that it suffices to invoke the MST algorithm on a subgraph of $H$ that includes just the edge with minimum load among any set of parallel edges. Since the algorithm always increases the load of edges by a fixed  amount, the edge with minimum load in each set of parallel edges can be easily maintained in $O(1)$ time per load increase by maintaining a cyclic  ordered list of each set of parallel edges and moving to choose the next element in  this cyclic list whenever the load of the current element is incremented. Hence, we can invoke the randomized linear time MST algorithm~\cite{KargerKT95} on a simple subgraph of $H$ of size $O(m)$. 
It follows that the running time of the packing algorithm, and hence of the entire procedure, is $O(m\log^2 n)$.
\end{proof}  

\subsection{Link-cut trees}

In our algorithm we will repeatedly use a structure that maintains a rooted tree $T$ with costs on the edges under the following operations:
\begin{enumerate}
\item $T.\textsc{add}(u,\Delta)$ adds $\Delta$ to the cost of every edge on the path from $u$ to the root,
\item $T.\textsc{path}(u)$ finds the minimum cost of an edge on the path from $u$ to the root,
\item $T.\textsc{subtree}(u)$ finds the minimum cost of an edge in the subtree rooted at $u$.
\end{enumerate}
All three operations can be supported with a link-cut tree~\cite{LinkCutTree} in amortized $O(\log |T|)$ time.\footnote{The original paper~\cite{LinkCutTree} did not include the third operation. However, as shown in~\cite[Appendix 17]{PlanarBook}, it is not difficult to add it.}
We note that we only require these three operations and do not actually use the link and cut functionality of link-cut trees. Other data structures might also be suitable. See, e.g., the use of top-trees in~\cite{SimpleArxiv}.

\subsection{Topologically induced subtrees}

For a rooted tree $T$ and a node $v$ we denote by $T_v$ the subtree of $T$ rooted at $v$. For an edge $e$ of $T$ we denote by $T_e$ the subtree of $T$ rooted at the lower endpoint of $e$.

Let $T$ be a binary tree with edge-costs and $n$ nodes, equipped with a data structure that can answer lowest common ancestor (LCA) queries on $T$ in constant time~\cite{HT1984}. Let $\Lambda = \{w_1,w_2,\ldots,w_s\}$ be a subset of nodes of $T$. We define a smaller tree $T^\Lambda$ that is equivalent to $T$ in the following sense: 

\begin{definition}[topologically induced tree]\label{def:induced}
We say that a tree $T^\Lambda$ is topologically induced on $T$ by $\Lambda$ if for every $S \subseteq \Lambda$, the minimum cost edge $f\in T^\Lambda$ with $T^\Lambda_f \cap \Lambda = S$ has the same cost as the  minimum cost edge $e\in T$ with $T_e \cap \Lambda = S$. 	
\end{definition}
To be clear, the above definition implies that, for any $S \subseteq \Lambda$, there is an edge $e \in T$ with $T_e \cap \Lambda = S$, if and only if there is an edge $f \in T^\Lambda$ with $T^\Lambda_f \cap \Lambda = S$.   
The term \emph{topologically induced tree} will be justified by the construction in the following lemma.

\begin{figure}[h]
\begin{center}
\includegraphics[scale=0.98]{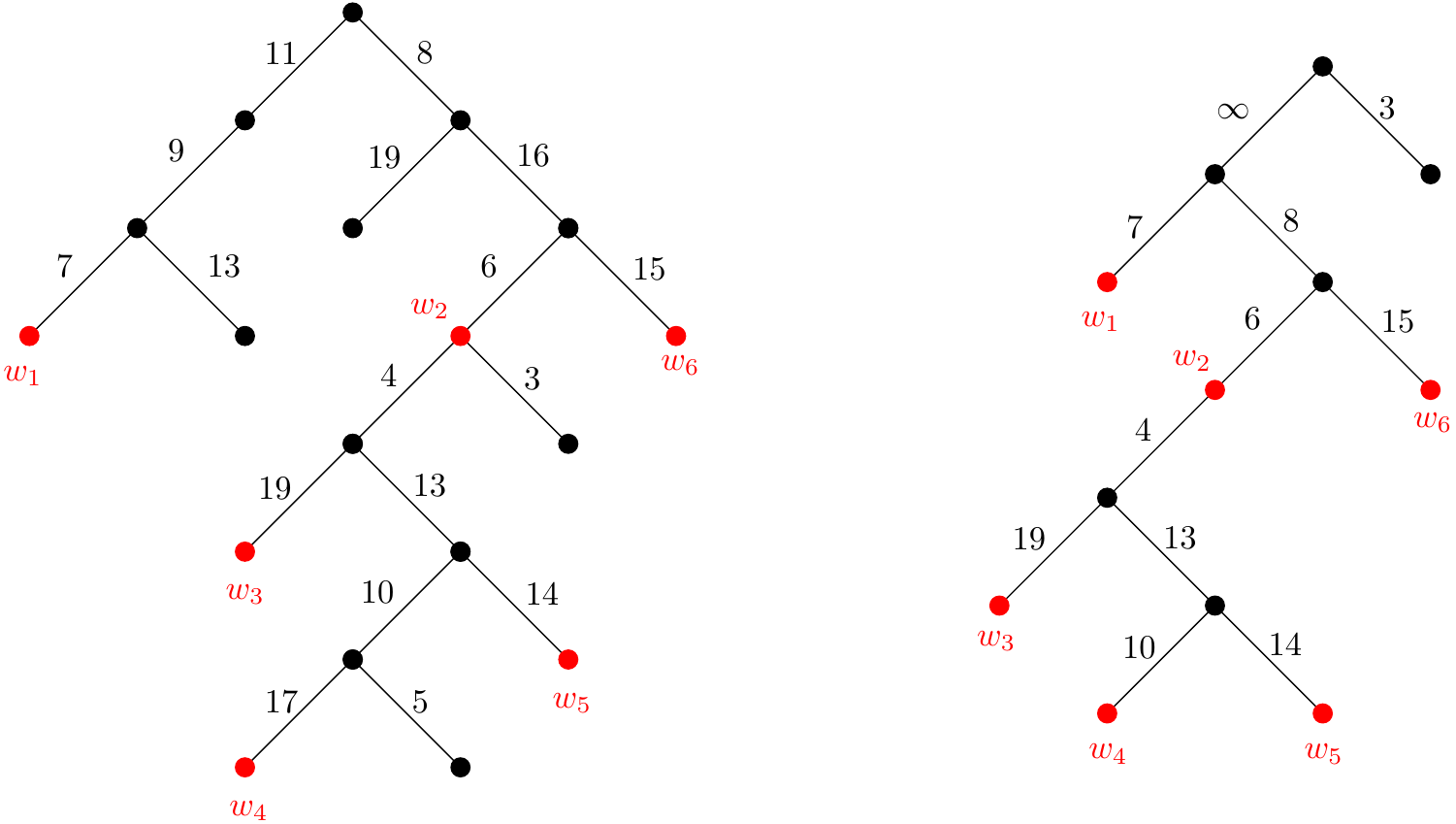}
\end{center}
\caption{On the left: a tree and (in red) a set $\Lambda=\{w_{1},w_{2},\ldots,w_{6}\}$ sorted according to their preorder numbers.
On the right: the corresponding topologically induced tree.}
\label{fig:induced}
\end{figure}

\begin{lemma}
	\label{lem:topology}
	There exists an algorithm that, given a binary tree $T$ with edge costs, equipped with a link-cut data structure, and a list $\Lambda = \{w_1,w_2,\ldots,w_s\}$ of nodes of $T$, ordered according to their visit time in a preorder traversal of $T$, 
constructs in $O(\min\{|T|, s\log|T|\})$ time, a tree $T^\Lambda$ of size $O(s)$ that is topologically induced on $T$ by $\Lambda$.
\end{lemma}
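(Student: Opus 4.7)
The plan is to take $V(T^\Lambda) = \Lambda \cup \{\mathrm{LCA}(w_i, w_{i+1}) : 1 \leq i < s\}$. In any tree, the pairwise LCAs of $s$ nodes sorted by preorder coincide with the LCAs of consecutive pairs and add at most $s-1$ new nodes, so $|V(T^\Lambda)| = O(s)$. To assemble the topology of $T^\Lambda$ I would run the classical auxiliary-tree construction: process $w_1,\ldots,w_s$ in the given preorder while maintaining a stack that represents the current rightmost root-to-leaf path of the partially built $T^\Lambda$. When processing $w_i$, compute $a = \mathrm{LCA}(w_{i-1}, w_i)$ in $O(1)$ time, pop all stack entries strictly below $a$ (attaching them as descendants of $a$ in $T^\Lambda$), push $a$ if it is not already on top, then push $w_i$. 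This performs $O(s)$ stack operations and $O(s)$ LCA queries for $O(s)$ total time, and each super-edge of $T^\Lambda$ corresponds bijectively to a maximal vertical segment of $T$ between two successive nodes of $V(T^\Lambda)$.

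Next I would assign each super-edge $(a,v) \in T^\Lambda$ the minimum cost of a $T$-edge on the $T$-path from $v$ up to (but not including) $a$. This is exactly the right assignment for the defining property: for any $S \subseteq \Lambda$, the $T$-edges $e$ with $T_e \cap \Lambda = S$ form precisely the maximal vertical segment whose lower endpoint is the unique node $v_S \in V(T^\Lambda)$ with $T_{v_S} \cap \Lambda = S$ and whose top lies just below the nearest relevant ancestor $a_S$; and $(a_S, v_S)$ is the unique super-edge $f \in T^\Lambda$ with $T^\Lambda_f \cap \Lambda = S$. The obstacle is that the link-cut interface exposes only $T.\textsc{path}(u)$, which returns the minimum on the $u$-to-root path rather than on an arbitrary ancestor-path. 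My workaround is to pick a value $M$ larger than the sum of absolute values of all input edge costs, invoke $T.\textsc{add}(a, M)$ to inflate every edge from $a$ upward, query $T.\textsc{path}(v)$ (inflated edges cannot be the minimum, so the answer is the minimum cost strictly below $a$), and finally invoke $T.\textsc{add}(a, -M)$ to restore costs. This uses $O(1)$ link-cut operations per super-edge, yielding $O(s \log |T|)$ overall.

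When $|T| \leq s \log |T|$ the bound $O(s \log |T|)$ is worse than linear, so I would fall back on an alternative $O(|T|)$ algorithm: a single postorder traversal that marks each node of $\Lambda$, flags as relevant any node lying in $\Lambda$ or having at least two children whose subtrees contain a marked descendant, and, for each relevant node, records the minimum edge cost along the path from it up to the nearest relevant ancestor as the traversal progresses. Combining the two algorithms gives $O(\min\{|T|, s\log|T|\})$. The main conceptual obstacle I anticipate is the ancestor-path minimum computation in the link-cut case; everything else reduces to a standard auxiliary-tree construction together with the short structural argument above for correctness.
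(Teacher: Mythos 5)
Your construction omits two cases that the definition of ``topologically induced'' requires, and they break your structural claim that for every $S\subseteq\Lambda$ the edges $e$ of $T$ with $T_e\cap\Lambda=S$ form a single vertical segment anchored at a node of $V(T^\Lambda)$. First, you take $V(T^\Lambda)=\Lambda\cup\{\mathrm{LCA}(w_i,w_{i+1})\}$, which makes $\mathrm{LCA}(w_1,w_s)$ the root of your $T^\Lambda$ with no parent edge. But whenever $\mathrm{LCA}(w_1,w_s)$ is not the root of $T$, there are edges $e$ of $T$ (the ones on the path from $\mathrm{LCA}(w_1,w_s)$ up to the root) with $T_e\cap\Lambda=\Lambda$, and your $T^\Lambda$ has no edge $f$ with $T^\Lambda_f\cap\Lambda=\Lambda$; the iff in Definition~\ref{def:induced} then fails. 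The paper fixes this by explicitly adding the root of $T$ to the node set. Second, and more fundamentally, the case $S=\emptyset$ does not fit your ``unique vertical segment'' picture at all: the edges $e$ with $T_e\cap\Lambda=\emptyset$ are all edges whose subtree misses $\Lambda$ entirely, and these are scattered throughout $T$ (every light side-subtree hanging off the paths between $\Lambda$-nodes contributes some). No super-edge in your $T^\Lambda$ has $T^\Lambda_f\cap\Lambda=\emptyset$, so again the condition fails whenever such an edge exists in $T$. The paper handles this by computing the minimum cost over all such edges (e.g.\ by temporarily adding $\infty$ to every root-path from a node of $\Lambda$ and calling $\textsc{subtree}$ at the root), then grafting an artificial root and a dangling edge of that cost onto $T^\Lambda$.

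Aside from these two omissions, your approach is the same as the paper's: the Cartesian-tree--style stack construction in $O(s)$ time, assigning each super-edge the minimum $T$-cost on its corresponding vertical path, and computing that minimum either via the $\textsc{add}(+M)/\textsc{path}/\textsc{add}(-M)$ trick in $O(s\log|T|)$ total or via a single $O(|T|)$ bottom-up pass over the (edge-disjoint) paths. Both gaps are repairable without changing your overall plan, but as written the proof does not establish the defining property for $S=\emptyset$ or, in general, for $S=\Lambda$.
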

\begin{proof}
We define the tree $T^\Lambda$ to be a tree over all nodes $w_i \in \Lambda$, together with the root and the lowest common ancestor in $T$ of every pair of nodes $w_i$ and $w_j$. For any two nodes $u,v\in T^\Lambda$, $u$ is an ancestor of $v$ in $T^\Lambda$ if and only if $u$ is an ancestor of $v$ in $T$. Thus, each edge $(u,v)$ of $T^\Lambda$ corresponds to the $u$-to-$v$ path in $T$. The edges on this path in $T$ are exactly the edges $e$ of $T$ with $T_e 
\cap \Lambda = T^\Lambda_{(u,v)} \cap \Lambda$. Hence, the paths of $T$ corresponding to distinct edges of $T^\Lambda$ are edge disjoint. We define the cost of the edge $(u,v)$ of $T^\Lambda$ to be the minimum cost of an edge on the corresponding path in $T$. It follows that for every $\emptyset \neq S \subseteq \Lambda$, the minimum cost edge $f\in T^\Lambda$ with $T^\Lambda_f \cap \Lambda = S$ has the same cost as the  minimum cost edge $e\in T$ with $T_e \cap \Lambda = S$.
To guarantee that this condition holds for $S = \emptyset$ as well, we choose the edge $e$ of $T$ with the minimum 
cost such that $T_e \cap \Lambda = \emptyset$ and proceed as follows if such an edge exists.
We create a new node $v$ and change the root of $T^\Lambda$ to $v$ by making the old root of $T^{\Lambda}$ a child of $v$
via an edge with infinite cost. We then add a new edge incident to $v$, whose cost is set to the cost of $e$.
This transformation does not change $T_e \cap \Lambda$ for any edge $e$ of $T^\Lambda$, but now that condition with $S=\emptyset$ is satisfied for the new edge incident to the root.  

We now turn to proving the construction time. We first prove that $T^\Lambda$ consists of at most $2s$ nodes. This is because $T^\Lambda$ consists only of the root, the nodes $w_i$ and LCA$(w_i,w_{i+1})$. To see this, consider two nodes $w_i$ and $w_j$ with $i<j$ such that their lowest common ancestor $u$ is different than $w_i$ and $w_j$. Let $u_\ell$ ($u_r$) be the left (right) child of $u$. Then, $w_i$ is a descendant of $u_\ell$ and $w_j$ a descendant of $u_r$. Let $i'$ be the largest index  such that $w_{i'}$ is in the subtree rooted at $u_\ell$. Then $u=$LCA$(w_{i'},w_{i'+1})$.

We next prove that $T^\Lambda$ can be constructed in $O(s)$ time. We use a method similar to constructing
the Cartesian tree~\cite{Cartesian1,Cartesian2} of a sequence: we scan $w_1,w_2,\ldots,w_s$ from the left to right while maintaining the
subtree of $T^\Lambda$ induced by $w_0 = $ LCA$(w_1,w_s)$, and $w_1,w_2,\ldots,w_i$. initially, the subtree of $T^\Lambda$ induced by $w_0$ and $w_1$ is just a single edge $(w_0,w_1)$. We keep the rightmost path of the subtree of $T^\Lambda$ induced by $w_0, w_1, \dots, w_i$ 
on a stack, with the bottommost edge on the top. To process $w_{i+1}$, we first
find $x = $ LCA$(w_i,w_{i+1})$. Then, we pop from the stack all edges 
$(u,v)$ such that $u$ and $v$ are both below (or equal to) $x$ in $T$. Finally, we possibly split
the edge on the top of the stack into two and push a new edge onto the stack. The amortized
complexity of every step is constant, so the total time is $O(s)$.

Once $T^\Lambda$ is constructed, we set the cost of every edge $(u,v)$ in $T^\Lambda$ to be the minimum cost of an edge
on the $u$-to-$v$ path in $T$. This can be done in $O(\log |T|)$ time per edge of $T^\Lambda$ by
first calling $T.\textsc{add}(u)(\infty)$, then $T.\textsc{path}(v)$ to retrieve the answer, and finally $T.\textsc{add}(u)(-\infty)$,
for a total of $O(s \log|T|)$ time.
Alternatively, we can explicitly go over the edges of the corresponding paths of $T$ for every edge of $T^\Lambda$.
We had argued above that these paths are disjoint so this takes $O(|T|)$ in total.

We also need to compute the cost of the edge $e$ of $T$ with minimum cost such that $T_e \cap \Lambda = \emptyset$. To this end, for each $v \in \Lambda$ we add $\infty$ to the cost of all edges on the path from $v$ to the root of $T$. This takes $O(\min(|T|,s \log |T|)$ by either a bottom up computation on $T$, or using $T.\textsc{add}(u,\infty)$ for every $v\in\Lambda$. We then retrieve the edge with minimum cost in the entire tree in $O(\log|T|)$ time by a call to $\textsc{subtree}$ for the root of $T$, and then subtract $\infty$ from the cost of all edges on the path from $v$ to the root of $T$ for every $v\in\Lambda$. 
\end{proof}

We will use the fact that the operation of taking the topologically induced subtree is composable in the following sense.
\begin{proposition}
	\label{prop:topo-compose}
Let $T$ be a binary tree with edge-costs. Let $\Lambda_2 \subseteq \Lambda_1$ be subsets of nodes of $T$. Let $T_1$ 
be topologically induced on $T$ by $\Lambda_{1}$ and $T_{2}$ be topologically induced on $T_{1}$ by $\Lambda_{2}$.
Then $T_{2}$ is topologically induced on $T_{1}$ by $\Lambda_{2}$.
\end{proposition}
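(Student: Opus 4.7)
The plan is to unfold the definition of ``topologically induced'' twice and compose via a partition argument on $\Lambda_1$-intersections. As written, the stated conclusion is literally identical to one of the hypotheses, so I will interpret it as the natural composability claim: \emph{$T_2$ is topologically induced on $T$ by $\Lambda_2$}. Fix an arbitrary $S\subseteq \Lambda_{2}$; I must show that the minimum cost edge $f\in T_{2}$ with $(T_{2})_{f}\cap \Lambda_{2}=S$ has the same cost as the minimum cost edge $e\in T$ with $T_{e}\cap \Lambda_{2}=S$, and that one exists iff the other does.

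Since $T_{2}$ is topologically induced on $T_{1}$ by $\Lambda_{2}$, that cost on the $T_{2}$ side equals (and exists iff) the cost of the minimum cost edge $g\in T_{1}$ with $(T_{1})_{g}\cap \Lambda_{2}=S$. So the heart of the argument reduces to proving the intermediate identity
\[
\min\{c(e) : e\in T,\ T_{e}\cap\Lambda_{2}=S\} \;=\; \min\{c(g) : g\in T_{1},\ (T_{1})_{g}\cap\Lambda_{2}=S\},
\]
where the minimum over an empty set is treated as ``does not exist.''

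To prove this identity, I would partition the search space on both sides according to the $\Lambda_{1}$-intersection. Because $\Lambda_{2}\subseteq \Lambda_{1}$, an edge $e$ satisfies $T_{e}\cap\Lambda_{2}=S$ iff $T_{e}\cap\Lambda_{1}=S'$ for some $S'\subseteq\Lambda_{1}$ with $S'\cap\Lambda_{2}=S$, and the analogous statement holds for $g\in T_{1}$ using $(T_{1})_{g}\cap\Lambda_{1}$. Thus each side equals the minimum, over all admissible $S'\subseteq \Lambda_{1}$ with $S'\cap\Lambda_{2}=S$, of the corresponding cost in $T$ (resp.\ $T_{1}$). For each such $S'$, the hypothesis that $T_{1}$ is topologically induced on $T$ by $\Lambda_{1}$ gives exactly what we need: the minimum cost edge in $T$ with $T_{e}\cap\Lambda_{1}=S'$ and the minimum cost edge in $T_{1}$ with $(T_{1})_{g}\cap\Lambda_{1}=S'$ have equal cost, and one exists iff the other does. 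Taking the minimum over all admissible $S'$ on both sides yields the identity, and chaining this with the $T_{2}$-on-$T_{1}$ equality completes the proof.

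The only subtlety I anticipate is the case $S=\emptyset$, for which Definition~\ref{def:induced} is phrased via ``for every $S\subseteq \Lambda$'' and is handled specially in the construction of Lemma~\ref{lem:topology}. However, the partition-and-minimize argument treats every $S'\subseteq \Lambda_{1}$ uniformly, including $S'=\emptyset$ (which contributes to the case $S=\emptyset$), so no separate handling is required. No quantitative estimates or constructions are needed, and the proof is essentially a one-line chase once the partition by $\Lambda_{1}$-intersection is set up.
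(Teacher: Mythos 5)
Your reading is correct: the printed conclusion repeats a hypothesis verbatim and is a typo for ``$T_2$ is topologically induced on $T$ by $\Lambda_2$.'' The paper gives no proof of this proposition, but your partition argument is correct and is the natural one: since $\Lambda_2 \subseteq \Lambda_1$ implies $T_e \cap \Lambda_2 = (T_e \cap \Lambda_1) \cap \Lambda_2$ for any edge $e$, the edges with $T_e \cap \Lambda_2 = S$ (in $T$ or in $T_1$) partition according to $S' = T_e \cap \Lambda_1$; the nonempty buckets and their bucket minima coincide across $T$ and $T_1$ by the first hypothesis, and chaining with the $T_2$-on-$T_1$ hypothesis yields both the cost equality and the existence-iff part.
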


\section{Finding a Minimum 2-respecting Cut}\label{sec:main}
Given a graph $G$ and a spanning tree $T$ of $G$, a cut in $G$ is said to {\em 2-respect} the tree $T$ if at most two edges $e,e'$ of $T$ cross the cut (these edges are said to {\em determine} the cut). In this section we prove the main theorem of this paper:

\begin{theorem}\label{thm:2respecting}
Given an edge-weighted graph $G$ with $n$ vertices and $m$ edges and a spanning tree $T$, the minimum (weighted) cut in $G$ that 2-respects $T$ can be found in $O(m\log n)$ time. 
\end{theorem}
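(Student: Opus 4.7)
The plan is to separate the problem into sub-problems according to the tree-order relationship of the two cut edges $e, e'$, and then reduce the hard case to a collection of simpler bipartite problems (the forthcoming Definition~\ref{def:bipartite}) that can be attacked through the topologically induced subtree construction of Lemma~\ref{lem:topology}. First I would binarize $T$ with constant blow-up so that Lemma~\ref{lem:topology} applies directly. The minimum 1-respecting cut is handled in $O(m+n)$ time by standard techniques. The 2-respecting case in which $e, e'$ are comparable (one an ancestor of the other) can be handled in $O(m\log n)$ time by a single bottom-up traversal using the link-cut tree primitives of the preliminaries: every non-tree edge triggers $O(1)$ path additions, and every tree edge triggers $O(1)$ path or subtree minimum queries, each of amortized cost $O(\log n)$.

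The main work is the incomparable case. I would apply a heavy path decomposition of $T$ and observe that for any pair of incomparable edges $(e, e')$, if $v$ is the lowest common ancestor of their lower endpoints, then the heavy path through $v$ continues into at most one of the two child subtrees, so one of $e, e'$ lies in a light subtree at $v$. This lets us associate each such pair with a specific heavy path $P$, and because the tree-path between the endpoints of any non-tree edge crosses $O(\log n)$ heavy paths we obtain the usual $\log n$ amortization on non-tree edges. For each heavy path $P$, I would then reduce the task of finding, for each $e \in P$, the best partner $e'$ in the hanging subtrees to a sequence of bipartite problems via a biased divide-and-conquer on $P$, where the bias is by the number of relevant non-tree edges on each side rather than by the number of tree edges on each side.

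At each recursive step I would replace the currently relevant union of hanging subtrees by its topologically induced subtree on the endpoints of the currently relevant non-tree edges, shrinking each bipartite subproblem to size proportional to its own share of non-tree edges. Composability (Proposition~\ref{prop:topo-compose}) lets us iteratively shrink the working instance as we descend the recursion without a geometric blow-up, and lets the compact representation be built once per heavy path rather than afresh at each recursion level.

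The main obstacle I anticipate is designing the bipartite problem and the biased divide-and-conquer carefully enough that (i) every non-tree edge pays only $O(\log n)$ work, summed across all recursion levels and all heavy paths, and (ii) every bipartite subproblem of compressed size $s$ is solved in $O(s)$ time rather than the naive $O(s\log n)$ that one would get by routing everything through link-cut trees. Meeting (ii) is really what requires the topologically induced subtree representation: it strips away the ``inert'' portions of the tree that would otherwise force the extra $\log n$ factor. If both budgets are honored, the total work sums to $O(m\log n)$, as claimed.
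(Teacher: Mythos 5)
Your high-level plan matches the paper: binarize $T$, solve the 1-respecting case and the comparable (ancestor--descendant) case via link-cut trees, then reduce the incomparable case to bipartite problems that are compressed with topologically induced subtrees and attacked by a biased divide-and-conquer on heavy paths. The budgets you identify --- $O(\log n)$ work per non-tree edge across all recursion levels, and linear work per compressed subproblem --- are exactly the ones the paper meets. However, your reduction to bipartite problems is structurally off in a way that matters. You apply heavy path decomposition to $T$ itself and then state the subproblem for a heavy path $P$ as ``for each $e \in P$, find the best $e'$ in the hanging subtrees.'' This omits the pairs whose LCA $v$ lies on $P$ but where $e$ is \emph{not} on $P$: for instance, $e$ in a light subtree hanging off some node of $P$ below $v$, or $e$ and $e'$ in two different light subtrees hanging off $P$. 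Those pairs are associated with $P$ by your own observation (their LCA is on $P$), yet your per-$P$ subproblem does not enumerate them. The paper sidesteps this by creating one bipartite problem per \emph{branching node} $w$ (with the two child subtrees $T_x$, $T_y$ as $T_1$, $T_2$), which cleanly partitions all incomparable pairs by the LCA; the heavy path decomposition (biased by the number of incident non-tree edges, not by node count) and the biased divide-and-conquer are then applied \emph{inside} a single bipartite problem to $T_1$, not to $T$.

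A second gap: even once you have the correct per-branching-node bipartite problem, the biased divide-and-conquer on the heavy path of $T_1$ is not correct if you carry along only the topologically induced copy of $T_2$ restricted to the current fragment's non-tree endpoints. When you process the middle node $u_i$ of a fragment, the two heavy-path edges $(u_{i-1},u_i)$ and $(u_i,u_{i+1})$ have subtrees in $T_1$ that also contain non-tree edges whose endpoints lie \emph{below} the fragment (the set the paper calls $L_\downarrow(f)$). Their contribution must be baked into the edge costs of the compressed copy of $T_2$, which is why the paper maintains both $T_2(f)$ and a modified tree $T'_2(f)$. Without this, the costs used when handling the two path edges at the split are simply wrong. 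So while your ingredients are the right ones, the reduction and the invariant maintained by the recursion both need to be restated before the $O(m\log n)$ bound actually follows.
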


The minimum cut determined by every single edge can be easily found in $O(m+n)$ time~\cite[Lemma 5.1]{Karger}. We therefore focus on finding the minimum cut determined by two edges. 
 Observe that the cut determined by $\{e,e'\}$ is unique and consists of all edges $(u,v) \in G$ such that the  $u$-to-$v$ path in $T$ contains exactly one of $\{e,e'\}$.

We begin by transforming $T$ (in linear time) into a binary tree. This is standard and is done by replacing every node of degree $d$ with a binary tree of size $O(d)$ where internal edges have weight $\infty$ and edges incident to leaves have their original weight. We also add an artificial root to $T$ and connect it to the original root with an edge of weight $\infty$. From now we will be working with binary trees only.
  
\subsection{Descendant edges}\label{sec:descendant}
 We first describe an $O(m\log n)$ time algorithm for finding the minimum cut determined by all pairs of edges $\{e,e'\}$ where $e'$ is a descendant of $e$ in $T$ (i.e. $e'$ is in the subtree of $T$ rooted at the lower endpoint of $e$). 
To this end we shall efficiently find, for each edge $e$ of $T$, the descendant edge $e'$ that minimizes the weight of the cut determined by $\{e,e'\}$, and return the pair minimizing the weight of the cut.
 
For a given edge $e$ of $T$, let $T_e$ denote the subtree of $T$ rooted at the lower endpoint of $e$. We associate with every node $x$ a list of all edges $(u,v)$ such that $x$ is the lowest common ancestor of $u$ and $v$. Note that all these lists can be computed in linear time, and form a partition of the edges of $G$.
We also compute in $O(m)$ time, for every edge $e$ of $T$, the total weight $A(e)$ of all edges with exactly one endpoint in $T_e$ (in fact, this is the information computed by Karger's algorithm for the 1-respecting case). Note that $A(e)$ includes the weight of $e$. 

Using a link-cut tree we maintain a {\em score} for every edge $e$ of $T$. All scores are first initialized to zero. Then, for every edge $(u,v)$ of $G$, we increase the score of all edges on the $u$-to-$v$ path in $T$ by the weight $w(u,v)$ of $(u,v)$. 
This takes $O(\log n)$ time per edge $(u,v)$ by calling $T.\textsc{add}(u,w(u,v))$, $T.\textsc{add}(v,(u,w(u,v))$ and $T.\textsc{add}(\text{LCA}(u,v),-2(u,w(u,v))$.
This initialization takes $O(m \log n)$ time. We then perform an Euler tour of $T$. When the tour first descends below a node $x$, for every edge $(u,v)$ in the list of $x$, we decrease the score of all edges on the $u$-to-$v$ path in $T$ by $2w(u,v)$.  Note that, at any point during this traversal, each edge $(u,v)$ either contributes $w(u,v)$ or $-w(u,v)$ to the score of every edge on the $u$-to-$v$ path in $T$, depending on whether the tour is yet to descend below LCA$(u,v)$ or has already done so.
As above, each update can be implemented in $O(\log n)$ time. Since every edge appears in exactly one list, the total time to perform all the updates is $O(m \log n)$. 

\begin{lemma}
Consider the point in time when the Euler tour had just encountered an edge $e$ for the first time. At that time, for every descendant edge $e'$ of $e$, the weight of the cut determined by $\{e,e'\}$ is $A(e)$ plus the score of $e'$.
\end{lemma}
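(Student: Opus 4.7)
The plan is to express both $\mathrm{Cut}(\{e,e'\})$ and the score of $e'$ as the same linear combination of cross-weights between the three vertex sets that $e$ and $e'$ carve out of $T$, and then read off the claimed identity. Set $A = T_{e'}$, $B = T_e \setminus T_{e'}$, and $C = V \setminus T_e$, and let $W(X,Y)$ denote the total weight of $G$-edges with one endpoint in $X$ and the other in $Y$. The unique bipartition of $V$ in which both $e$ and $e'$ cross the cut places $B$ on one side and $A \cup C$ on the other, hence $\mathrm{Cut}(\{e,e'\}) = W(A,B) + W(B,C)$. On the other hand, $A(e) = W(T_e, C) = W(A,C) + W(B,C)$ by definition, so it suffices to show that the score of $e'$ at time $t$ equals $W(A,B) - W(A,C)$.

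For the score I would trace the contribution of each $G$-edge $(u,v)$ separately. Since $e'$ lies on the $u$-to-$v$ path in $T$ iff exactly one of $u,v$ lies in $A$, only edges with exactly one endpoint in $A$ ever touch the score of $e'$. For each such edge, the initialization contributes $+w(u,v)$, and the Euler-tour update subtracts $2w(u,v)$ at the instant the tour first descends below $\mathrm{LCA}(u,v)$, so at time $t$ the net contribution is $+w(u,v)$ if $\mathrm{LCA}(u,v)$ is still unprocessed and $-w(u,v)$ if it has already been processed. I would then split the relevant edges by where the non-$A$ endpoint lies: if it lies in $B$, then both endpoints are inside $T_e$ and $\mathrm{LCA}(u,v) \in T_e$, but since the tour has only just arrived at $v_e$ and has not yet descended below it, no node of $T_e$ has been processed, giving the $+$ sign; if it lies in $C$, then $\mathrm{LCA}(u,v)$ is a proper ancestor of $v_e$, and since the tour has reached $v_e$ by time $t$, every such ancestor has already been descended below, giving the $-$ sign. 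Summing the two cases recovers exactly $W(A,B) - W(A,C)$.

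The only delicate point is the processed/unprocessed dichotomy for LCAs at time $t$. Nodes of $T$ that are neither proper ancestors of $v_e$ nor inside $T_e$ may or may not be processed depending on the tour's earlier history, but such a node can never be the LCA of an edge with $e'$ on its $T$-path, so this ambiguity never enters the accounting. What makes the argument go through is the fact that ``first descent below a node'' is a single well-defined event, that every strict ancestor of $v_e$ has been descended below by time $t$, and that $v_e$ together with the rest of $T_e$ has not.
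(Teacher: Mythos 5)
Your proof is correct and follows essentially the same approach as the paper's: both decompose the cut into the edges between $T_e \setminus T_{e'}$ and $T_{e'}$ and the edges between $T_e \setminus T_{e'}$ and $V \setminus T_e$, observe that $A(e)$ overcounts by the edges between $T_{e'}$ and $V \setminus T_e$, and then argue that the score of $e'$ at the relevant moment equals $W(T_{e'}, T_e\setminus T_{e'}) - W(T_{e'}, V\setminus T_e)$ by checking, for each edge touching $T_{e'}$, whether its LCA has or has not yet been ``descended below'' by the Euler tour. Your $A/B/C$ bookkeeping makes the case split a bit more explicit than the paper's prose, but the underlying argument is the same.
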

 
 \begin{proof} Observe that the weight of the cut determined by $\{e,e'\}$ is the sum of weights of all edges with (1) one endpoint in $T_e - T_{e'}$ and the other not in $T_e$, or (2) one endpoint in $T_e - T_{e'}$ and the other in $T_{e'}$.
Note that the edges satisfying (1) have exactly one endpoint in $T_e$, and hence their weight is accounted for in $A(e)$. However, $A(e)$ also counts the weight of edges $(u,v)$ with one endpoint in $T_{e'}$ and the other not in $T_e$. Such edges do not cross the cut. Note that for such edges both $e$ and $e'$ are on the $u$-to-$v$ path in $T$. The fact that $e$ is on the $u$-to-$v$ path implies that the traversal has already descended below LCA$(u,v)$. Hence, $(u,v)$ currently contributes $-w(u,v)$ to the score of $e'$, offseting its contribution to $A(e)$.
Next note that the edges satisfying (2) are edges $(u,v)$ with both $u$ and $v$ in $T_e$, which means that they are not accounted for in $A(e)$, and that the traversal did not yet descend below LCA$(u,v)$. Hence the contribution of such an edge $(u,v)$ to the score of $e'$ is indeed the weight of $(u,v)$.
\end{proof}

By the lemma, the descendant edge $e'$ of $e$ that minimizes the weight of the cut determined by $\{e,e'\}$ is the edge with minimum score in the subtree of $e$ at that time. The score of this edge $e'$ can be found in $O(\log n)$ time by calling $T.\textsc{subtree}(x)$, where $x$ is the lower endpoint of $e$.

\subsection{Independent edges}
We now describe an $O(m\log n)$ time algorithm for finding the minimum cut determined by all pairs of edges $\{e,e'\}$ where $e$ is {\em independent} of $e'$ in $T$ (i.e. $e$ is not a descendant of $e'$ and $e'$ is not a descendant of $e$). We begin by showing that the problem can be reduced to the following {\em bipartite} problem: 

\begin{definition}[The bipartite problem] \label{def:bipartite}
Given two trees $T_1$ and $T_2$ with costs on the edges and a list of non-tree edges $L=\{(u,v) :  u\in T_1, v\in T_2  \}$ where each non-tree edge has a cost, find a pair of edges $e\in T_1$ and $e'\in T_2$ that minimize the sum of costs of $e$, of $e'$, and of all non-tree edges $(u,v)\in L$ where $u$ is in ${T_1}_e$, and $v$ is in ${T_2}_{e'}$. The size of such a problem is defined as the number of non-tree edges in $L$ plus the sizes of $T_1$ and $T_2$. 
\end{definition}

\begin{lemma}\label{lem:reduction}
Given an edge-weighted graph $G$ with $n$ vertices and $m$ edges and a spanning tree $T$, finding the minimum cut among those determined by a pair of independent edges $\{e,e'\}$ can be reduced in $O(m\log n)$ time to multiple instances of the bipartite problem of total size $O(m)$.
\end{lemma}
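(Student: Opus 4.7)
I will charge each independent pair $\{e,e'\}$ to the lowest common ancestor $w$ of its two lower endpoints in $T$. Writing $w_1,w_2$ for the children of $w$, we have $e\in T_{w_1}$ and (WLOG) $e'\in T_{w_2}$. An inclusion--exclusion computation on the partition $(T_e\cup T_{e'},\,V\setminus(T_e\cup T_{e'}))$ yields that the cut weight equals $A(e)+A(e')-2\,C(e,e')$, where $A(\cdot)$ is the quantity already computed in Section~\ref{sec:descendant} and $C(e,e')=\sum_{(x,y):\,x\in T_e,\,y\in T_{e'}} w(x,y)$. The crucial observation is that every non-tree edge $(x,y)$ entering $C(e,e')$ has $x\in T_{w_1}$ and $y\in T_{w_2}$, so $\mathrm{LCA}(x,y)=w$; hence the pair-minimization at split $w$ only sees the bucket $L_w=\{(x,y)\in G\setminus T : \mathrm{LCA}(x,y)=w\}$, and trivially $\sum_w|L_w|\le m$.

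\textbf{Bipartite instance.}
Set tree-edge cost to $A(\cdot)$ on both sides and non-tree-edge cost of $(x,y)\in L_w$ to $-2w(x,y)$; then the objective in Definition~\ref{def:bipartite} coincides with $A(e)+A(e')-2\,C(e,e')$. Choosing $T_1=T_{w_1},\,T_2=T_{w_2}$ directly would make the total instance size much larger than $m$, so I instead let $\Lambda_1,\Lambda_2$ be the $T_{w_1}$- and $T_{w_2}$-endpoints of $L_w$ and use Lemma~\ref{lem:topology} to compress each side to its topologically induced tree of size $O(|\Lambda_i|)$, augmented on top so that the tree edges $(w,w_1)$ and $(w,w_2)$ remain candidates for the pair. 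Because the bipartite objective depends on $(e,e')$ only through the footprints $T_e\cap\Lambda_1$ and $T_{e'}\cap\Lambda_2$, Definition~\ref{def:induced} guarantees that the compressed instance has the same optimum as the uncompressed one. Summed over $w$, the total instance size is $O(\sum_w|L_w|)=O(m)$.

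\textbf{Algorithm and main obstacle.}
Concretely I first compute $A(\cdot)$ in $O(m\log n)$ time by the Euler-tour argument of Section~\ref{sec:descendant}, build an $O(1)$-LCA data structure on $T$, and install a single global link-cut tree on $T$ with costs $A(\cdot)$ to support the operations of Lemma~\ref{lem:topology}. One sweep over the $m$ non-tree edges buckets them into the lists $L_w$ in $O(m)$ time. For each $w$ with $L_w\neq\emptyset$, two calls to Lemma~\ref{lem:topology} produce $T_1$ and $T_2$ in $O(|L_w|\log n)$, for $O(m\log n)$ overall. Splits with $L_w=\emptyset$ have $C\equiv 0$ and thus give cut weight $A(e)+A(e')=M(u_1)+M(u_2)$, where $M(v)=\min_{e\in T_v}A(e)$ is computed once bottom-up in $O(n)$ time, so these are handled outside the bipartite framework. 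The main subtlety is verifying that the topological compression is both complete and sound: every pair $(e,e')$ at split $w$ must be dominated by some candidate in the bipartite instance---this follows by evaluating at the topologically induced representatives of $T_e\cap\Lambda_i$ and invoking Definition~\ref{def:induced}, where the corner cases $S_i=\emptyset$ and $S_i=\Lambda_i$ are respectively captured by the ``root-edge'' construction inside Lemma~\ref{lem:topology} and by the augmentation that places $(w,w_i)$ above the induced subtree's root---and conversely every candidate returned by the bipartite instance corresponds to a genuine pair of tree edges in $T_{w_1}\times T_{w_2}$ realizing the claimed footprint, so no spurious under-estimate can arise.
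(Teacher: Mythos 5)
Your proposal is correct and follows essentially the same route as the paper: you bucket independent pairs by the LCA $w$ of their lower endpoints, express the cut weight as $A(e)+A(e')-2C(e,e')$ (which matches the paper's ``cost of $e$ plus cost of $e'$ minus $2\cdot$(weight of crossing non-tree edges)''), observe that only non-tree edges with LCA equal to $w$ contribute to $C(e,e')$, and compress each side of each bucket via Lemma~\ref{lem:topology} so the total instance size is $O(m)$. The ``augmentation on top'' you mention is actually unnecessary: the topologically induced tree from Lemma~\ref{lem:topology} always includes the root of $T_1$ (namely $w$), and the cost assigned to its topmost edge is the minimum cost along the path from $w$ down to the topmost LCA of $\Lambda_1$, which already passes through $(w,w_1)$; the paper achieves the same end by simply putting $w$ and $x$ (resp.\ $w$ and $y$) into $\Lambda_x$ (resp.\ $\Lambda_y$).

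One inaccuracy to flag: for $L_w=\emptyset$ you assert the optimum for the split is $M(u_1)+M(u_2)$ with $M(v)=\min_{e\in T_v}A(e)$, but $T_{u_i}$ does not contain the edge $(w,u_i)$, and $A\bigl((w,u_i)\bigr)$ can be smaller than $\min_{e\in T_{u_i}}A(e)$, so $M(u_1)+M(u_2)$ need not be the minimum for this split. This does not break the overall algorithm, because when $L_w=\emptyset$ every 2-cut at this split has weight $A(e)+A(e')\ge\min\{A(e),A(e')\}$ and is therefore never strictly better than the best 1-respecting cut, which is computed separately; but this justification must be stated rather than asserting $M(u_1)+M(u_2)$ is the split optimum. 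In the paper's version no special case arises, precisely because $w,x,y$ are always placed in $\Lambda_x,\Lambda_y$ so the edges $(w,x),(w,y)$ survive compression even when $L_w=\emptyset$.
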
 

\begin{proof}
Recall that every node $w$ of $T$ has at most two children. We create a separate bipartite problem for every node $w$ of $T$ that has exactly two children ($x$ and $y$). This bipartite problem will be responsible for finding the minimum cut determined by all pairs of independent edges $\{e,e'\}$ where $e$ is in $T_x$ and $e'$ is in $T_y$. 

Throughout our description, note the distinction between edge weights and edge costs. The input graph $G$ has edge weights, and the goal is to find the cut with minimum weight. The bipartite problems we define have edge costs, which are derived from the weights of edges in the input graph. 

We initialize the cost of every edge of $G$ to be zero. Then, for every edge $f=(u,v)$ of $G$, we add the weight of $f$ to the cost of every edge on the $u$-to-$v$ path. We maintain the costs in a link-cut tree so each $f$ is handled in $O(\log n)$ time. Now consider any node $w$ with exactly two children $x$ and $y$, and any pair of independent edges $\{e,e'\}$ where $e$ is in $T_x$ and $e'$ is in $T_y$. 
Observe that the edges crossing the cut determined by $\{e,e'\}$ are exactly the edges $f=(u,v)$ with one endpoint in $T_e$ or in $T_{e'}$, and the other endpoint not in $T_e$ nor in $T_{e'}$. Hence, the weight of the cut determined by $\{e,e'\}$ equals the sum of the cost of $e$ plus the cost of $e'$ minus twice the total weight of all non-tree edges $f=(u,v)$ such that $u$ is in $T_e$ and $v$ is in $T_{e'}$. 

We therefore define the bipartite problem for $w$ as follows: (1) $T_1$ is composed of the edge $(w,x)$ and the subtree rooted at $x$ with costs as described above, (2) $T_2$ is composed of the edge $(w,y)$ and the subtree rooted at $y$ with the costs as described above, and (3) for every non-tree edge $f=(u,v)$ with weight $c$ such that LCA$(u,v)=w$ the list of non-tree edges $L$ includes $(u,v)$ with cost $-2c$. 
By construction, the solution to this bipartite problem is the pair of independent edges $e,e'$ with $e \in T_x$ and $e' \in T_y$ that minimize the weight of the cut in $G$ defined by $e$ and $e'$. 

The only issue with the above bipartite problem is that the overall size of all bipartite problems (over all nodes $w$) might not be $O(m)$. This is because the edges of $T$ might appear in the bipartite problems defined for more than a single node $w$. In order to guarantee that the overall size of all bipartite problems is $O(m)$, we construct a compact bipartite problem using the topologically induced trees of Lemma~\ref{lem:topology}.

We construct in $O(m)$ time a constant-time LCA data structure~\cite{HT1984} for $T$.
In overall $O(m\log n)$ time, we construct, for each node $w\in T$ with exactly two children $x$ and $y$:
\begin{enumerate}
	\item A list $L_w$ of all non-tree edges $(u,v)$ with LCA$(u,v) = w$.
	\item A list $\Lambda_x = \{w,x\} \cup \{u : (u,v) \in L_w \text{ and } u \in T_x\}$, sorted according to their visit time in a preorder traversal of T.
	\item A list $\Lambda_y = \{w,y\} \cup \{v : (u,v) \in L_w \text{ and } v \in T_y\}$, sorted according to their visit time in a preorder traversal of T.
\end{enumerate}
These lists require $O(m)$ space and can be easily computed in $O(m\log n)$ time by going over the non-tree edges, because each non-tree edge is in the list $L_w$ of a unique node $w$.

The list $L$ for the compact bipartite problem is identical to the list $L$ for the non-compact problem.
The tree $T^\circ_1$ ($T^\circ_2$) for the compact bipartite problem of $w$ is the topologically tree induced on $(w,x) \cup T_x$ ($(w,y) \cup T_y$) by $\Lambda_x$  ($\Lambda_y$). This is done in $O(|\Lambda_x| \log n)$ ($O(|\Lambda_{y}|\log n)$) time by invoking Lemma~\ref{lem:topology}.
It follows that the total time for constructing all compact bipartite problems is $O(m\log n)$ and the their total space is $O(m)$.

It remains to argue that the solution to the compact bipartite problem is identical to the solution to the non-compact one.
Observe that the cost of a solution $e,e'$ for the non-compact bipartite problem is the cost of $e$ plus the cost of $e'$ plus the cost of all edges in $L$ with one endpoint in ${T_1}_e \cap \Lambda$ and the other endpoint in ${T_2}_{e'} \cap \Lambda$. 

Consider now any pair of edges $e$ and $e'$ for the non-compact bipartite problem. By definition of topologically induced trees, the minimum cost edge $f \in T^\circ_1$ with ${T^\circ_1}_f \cap \Lambda = {T_1}_e \cap \Lambda$, has cost not exceeding that of $e$. An analogous argument holds for $e'$ and an edge $f'$ of $T^\circ_2$. Hence, the cost of the optimal solution for the compact problem is not greater than that of the non-compact problem. Conversely, for any edge $f$ in $T^\circ_1$, there exists an edge $e$ in $T_1$ with cost not exceeding that of $f$ and ${T^\circ_1}_f \cap \Lambda = {T_1}_e \cap \Lambda$. Hence, the cost of the optimal solution for the compact problem is not less than that of the non-compact problem. It follows that the two solutions are the same.
\end{proof} 

The proof of Theorem~\ref{thm:2respecting} follows from the above reduction and the following solution to the bipartite problem:

\begin{lemma}
A bipartite problem of size $m$ can be solved in $O(m\log m)$ time.
\end{lemma}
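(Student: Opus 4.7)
The plan is a recursive algorithm on $T_1$ driven by heavy-path decomposition, with $T_2$ (and its companion list $L$) repeatedly compacted using the topologically induced subtree construction of Lemma~\ref{lem:topology} so that every subproblem operates on trees of size proportional to its local list of non-tree edges. We preprocess constant-time LCA on $T_1$ and $T_2$ and maintain a link-cut tree on $T_2$ whose edge scores are initialized to the costs $c(\cdot)$. The main procedure $\textsc{Solve}(T_1, T_2, L)$ takes the heavy path $P = v_0, v_1, \ldots, v_k$ of $T_1$ from the root $v_0$ and proceeds in two stages: a sweep that handles all candidate pairs with $e \in P$, followed by recursive calls on the light subtrees hanging off $P$.

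In the sweep stage we activate every $(u,v) \in L$ with $u \in T_{1, v_1}$ by calling $T_2.\textsc{add}(v, c(u,v))$, so that the link-cut score on any edge $e' \in T_2$ equals $c(e') + \sum_{(u,v) \in L \text{ active},\, v \in T_{2,e'}} c(u,v)$. This is exactly the correct objective for $e = e_1$, so we then process $e_1, e_2, \ldots, e_k$ from top to bottom: at step $i$ we retrieve the minimum score via $T_2.\textsc{subtree}$ on the root of $T_2$, combine with $c(e_i)$ to record the best value of the pair $(e_i, \cdot)$, and then, before advancing to $e_{i+1}$, deactivate every $(u,v) \in L$ whose $u$-endpoint lies in $\{v_i\}$ or in a light subtree hanging from $v_i$ via $T_2.\textsc{add}(v, -c(u,v))$. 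After the sweep we undo the remaining activations so the link-cut tree is restored for outer invocations. In the recursion stage, for each light child $c_j$ of a node of $P$ with subtree $T_1^{(j)}$, we set $L^{(j)} = \{(u,v) \in L : u \in T_1^{(j)}\}$, invoke Lemma~\ref{lem:topology} to build $T_1^{(j), \circ}$ topologically induced on $T_1^{(j)}$ by $\{c_j\} \cup \{u : (u,v) \in L^{(j)}\}$ and $T_2^{(j), \circ}$ topologically induced on $T_2$ by $\{v : (u,v) \in L^{(j)}\}$, and recursively solve $(T_1^{(j), \circ}, T_2^{(j), \circ}, L^{(j)})$. Correctness of the compression is immediate from the definition of topologically induced tree together with Proposition~\ref{prop:topo-compose}: the bipartite objective depends on $(e, e')$ only through $T_{1,e} \cap \Lambda_1^{(j)}$ and $T_{2,e'} \cap \Lambda_2^{(j)}$, so the compacted subproblem has the same optimum as its uncompressed counterpart.

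For the running time, heavy-path decomposition guarantees $|T_1^{(j)}| \le |T_1|/2$, so the recursion depth is $O(\log m)$ and each non-tree edge appears in $O(\log m)$ subproblems; after reduction every subproblem has tree sizes $O(|L^{(j)}|)$, so its link-cut operations cost $O(\log m)$ each and the induced-subtree construction costs $O(|L^{(j)}| \log m)$ by Lemma~\ref{lem:topology}. The main obstacle is that this direct accounting totals to $O(m \log^2 m)$, and the heart of the proof is shaving the extra logarithmic factor. The plan is a biased divide-and-conquer refinement: inside each heavy path we partition $P$ at the weighted midpoint with respect to the number of non-tree edges that deactivate below it (rather than at the topological midpoint), recurse on the two halves each equipped with its own compacted $T_2$ via Lemma~\ref{lem:topology}, and charge the work on each non-tree edge against a constant-factor decrease in either the surviving heavy-path length or the surviving $L$ on its side of the split. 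This produces a balanced recurrence $T(m) = T(m_1) + T(m_2) + O(m)$ with $m_1 + m_2 \le m + O(1)$, which telescopes to the advertised $O(m \log m)$ bound.
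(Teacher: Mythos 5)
Your proposal assembles the right ingredients---heavy path decomposition, compaction via topologically induced subtrees, and a biased split along the heavy path---and you correctly diagnose that the sweep-plus-recursion version only gives $O(m\log^2 m)$. But the refinement that is supposed to shave the extra logarithm is left as a one-sentence sketch, and it omits the mechanism that makes the recursion correct. When you split a heavy path $P$ at a weighted midpoint $u_i$ and recurse on the top subpath with ``its own compacted $T_2$,'' every edge $e$ in that subpath has $T_{1,e}$ containing the bottom subpath and all its hanging subtrees, so the non-tree edges attached there still contribute to the score of every $e'\in T_2$. Your sweep handles this correctly only because it keeps those contributions active in a single global link-cut tree on the full $T_2$; once you pass a compacted $T_2$ that only sees the $y$-endpoints of $L^{(\text{top})}$, those contributions vanish and the recursive answer is wrong. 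The paper's fix is to carry, alongside $T_2(f)$, a second tree $T'_2(f)$ whose edge costs are pre-loaded with the total cost of the edges in $L_\downarrow(f)$ (those incident to the subtree of the heavy child of the fragment's bottom node), maintained incrementally across recursive calls. Without an analogue of this, the two-halves recursion does not compute the right quantity.

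Two further issues. First, the recurrence you state, $T(m)=T(m_1)+T(m_2)+O(m)$ with $m_1+m_2\le m+O(1)$, does not imply $O(m\log m)$: nothing forbids $m_1=m-O(1)$, which gives quadratic time. The split is in fact three-way (top subpath, bottom subpath, and the whole light subtree of the middle node), and the light subtree's share is bounded by $m/2$ only when the fragment is a complete heavy path; the paper therefore tracks two mutually recursive quantities and argues that $m$ halves every two levels. Your ``charge against a constant-factor decrease'' phrase gestures at this but the stated recurrence does not capture it. Second, for the improved bound the induced trees of a subproblem must be built from the parent's already-compacted trees in $O(|L(f)|)$ time, using Proposition~\ref{prop:topo-compose} to justify composing inductions; as written you induce from the full $T_2$, which costs $O(|L(f)|\log m)$ and reintroduces the factor you are trying to remove.
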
 

\begin{proof}

Recall that in the bipartite problem we are given two trees $T_1$ and $T_2$ with edge-costs and a list of non-tree edges $L=\{(u,v) :  u\in T_1, v\in T_2\}$ where each non-tree edge has a cost.
To prove the lemma, we describe a recursive $O(m\log m)$ time algorithm that finds, for every edge $e\in T_1$, the best edge $e'\in T_2$ (i.e. the edge $e'$ that minimizes the sum of costs of $e'$ and of all non-tree edges $(u,v)\in L$ where $u$ in ${T_1}_e$ and $v \in {T_2}_{e'}$). 

We begin by applying a standard heavy path decomposition~\cite{HT1984} to $T_1$, guided by the number of non-tree edges: The {\em heavy} edge of a node of $T_1$ is the edge leading to the child whose subtree has the largest number of incident non-tree edges in $L$ (breaking ties arbitrarily). The other edges are called {\em light}. The maximal sets of connected heavy edges define a decomposition of the nodes into {\em heavy paths}.

We define a {\em fragment} of the tree $T_1$ to be a subtree of $T_1$ formed by a contiguous subpath $u_1-u_2-\cdots-u_k$ of some heavy path of $T_1$, together with all subtrees hanging from this subpath via light edges. 
Given a fragment $f$, let $L(f) = \{(x_1,y_1), (x_2,y_2) \dots, (x_\ell,y_\ell)\}$ be the set of edges $(x,y)$ of $L$ with $x \in f$. We define the induced subtree $T_2(f)$ to be the tree topologically induced on $T_2$ by the root of $T_2$ and $\{y_1, y_2, \dots, y_{\ell}\}$. The size of $T_2(f)$ is $|T_2(f)| = O(|L(f)|)$. 
We also define a modified induced subtree $T'_2(f)$ as follows.
Let $L_\downarrow(f)$ be the set of edges $(x,y)\in L $ with $x$ in the subtree rooted at the heavy child of the last node $u_k$ of the fragment $f$ (if such a heavy child exists).   
Consider the tree $T_2$, where the cost of each edge $e'$ of $T_2$ is increased by the total cost of all edges $(x,y)\in L_\downarrow(f)$, where $y$ is in ${T_2}_{e'}$.
The modified induced subtree $T'_2(f)$ is defined as the tree topologically induced by the root of $T_2$ and $\{y_1, y_2, \dots, y_\ell\}$  on this modified $T_2$. 

\begin{figure}[h]
\begin{center}
\includegraphics[scale=0.98]{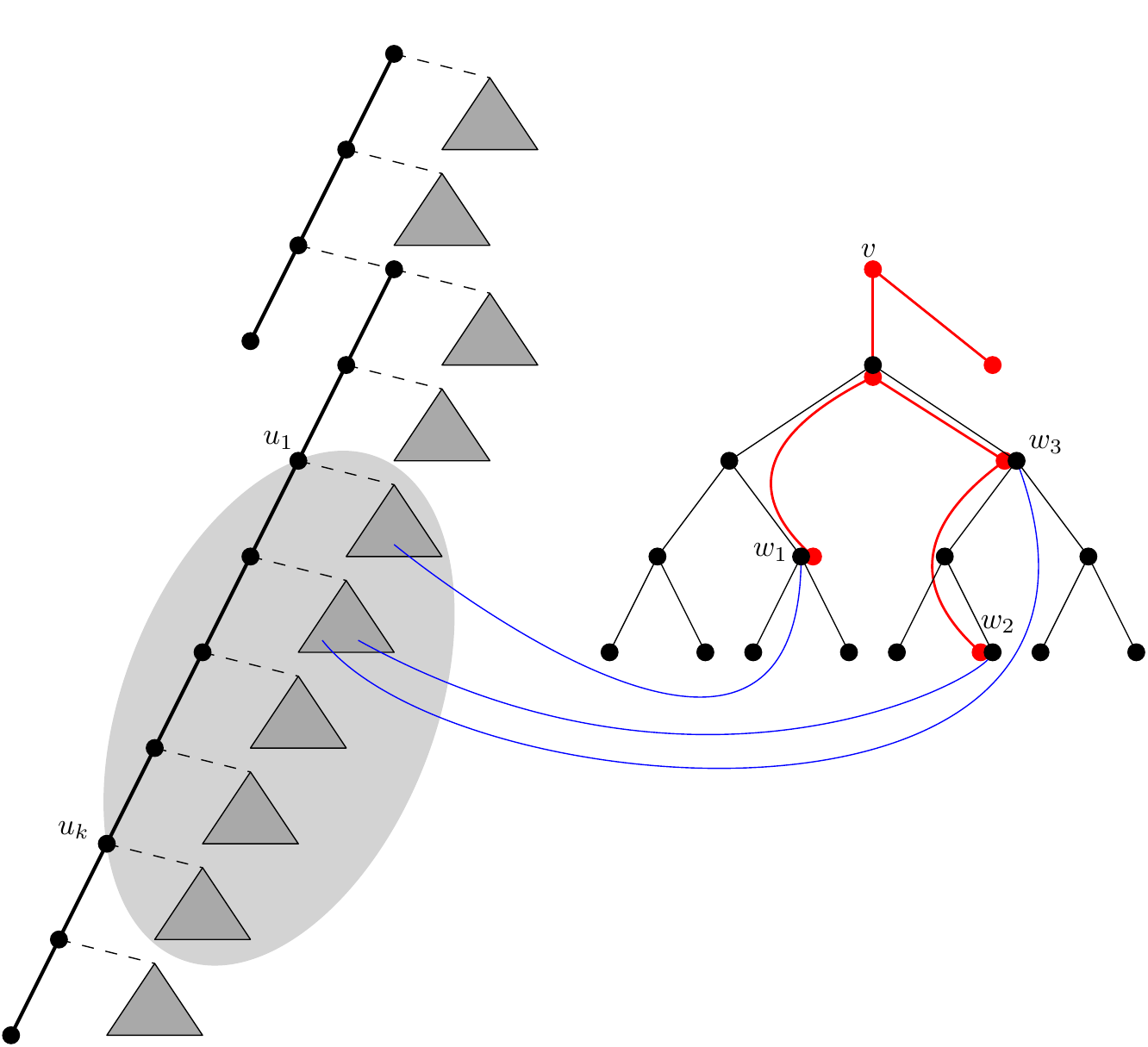}
\end{center}
\caption{On the left: A fragment (in light gray) in the tree $T_1$,  defined by the top node $u_{1}$ and the bottom node $u_{k}$, both laying on the same heavy path (solid edges). The triangles (in dark gray) are the subtrees hanging from the heavy paths via light edges (dashed). On the right: The tree $T_2$ (black) connected to the fragment via three non-tree edges (blue). The endpoints $w_1,w_2,w_3$ of these edges define the topologically induced tree (in red).}
\label{fig:fragment}
\end{figure}

We are now ready to describe the recursion. The input to a recursive call is a fragment $f$ of $T_1$ and the list $(x_1,y_1),(x_2,y_2),\ldots,(x_\ell,y_\ell)$ of all non-tree edges in $L$ with $x_i$ in $f$, together with $T_{2}(f)$ and $T'_{2}(f)$. 
A fragment $f$ is specified by the {\em top} node ($u_1$) and the {\em bottom} node ($u_k$)  of the  corresponding subpath of a heavy path of $T_1$. In the first call, $f$ is specified by the root of $T_1$ and the leaf ending the heavy path of $T_1$ that contains the root. That is, in the first call $f$ is the entire tree $T_1$. The list of non-tree edges for the first call is the entire list $L$. 
  The recursion works by selecting the {\em middle} node of the subpath, defined as follows: We define the {\em light size} $s_i$ of node $u_i$ as the number of non-tree edges $(x,y)\in L$ where either $x=u_i$ or $x$ is in the subtree rooted at the light child of $u_i$. Note that $s_1+s_2+\ldots+s_k = |L(f)|$.  
If $s_1 > |L(f)|/2$ then the middle node is defined as $u_1$. Otherwise, the middle node is defined as the node $u_i$ such that $s_1+\ldots+s_{i-1} \le  |L(f)|/2$ but $s_1+\ldots+s_i > |L(f)|/2$. We keep, for every heavy path $P$ of $T_1$ a list of the nodes of $P$ with non-zero light size, ordered according to their order on $P$. We find the middle node $u_i$ in $O(|L(f)|)$ time by going over the nodes in this list one after the other until we encounter the middle node.
 
After identifying the middle node $u_i$ we apply recursion on the following three fragments: the fragment defined by subpath $u_1-\cdots-u_{i-1}$, the fragment defined by subpath $u_{i+1}-\cdots-u_k$, and the fragment consisting of the entire subtree rooted at the light child of $u_i$.
Before a recursive call to fragment $g$, we construct the appropriate $T_2(g)$ and $T'_2(g)$. The induced tree $T_2(g)$ can be computed from $T_2(f)$ in $O(|T_2(f)|) = O(|L(f)|)$ time by invoking Lemma~\ref{lem:topology} on $T_2(f)$ with $\Lambda_g = r \cup \{y : (x,y)\in L(g)\}$, where $r$ is the root of $T_2$. Note that we had defined $T_2(g)$ as the topologically induced tree on $T_2$ by $\Lambda_g$, not on $T_2(f)$ by $\Lambda_g$. However, since $\Lambda_g \subseteq \Lambda_f$, by Proposition~\ref{prop:topo-compose}, the two definitions are equivalent.

For constructing $T'_2(g)$ from $T'_2(f)$, we first need to increase the cost of each edge $\tilde e$ of $T'_2(f)$ by the total cost of edges in $L_\downarrow(g) \setminus L_\downarrow(f)$ that are incident to $T'_2(f)_{\tilde e}$. 
This can be done in a single bottom up traversal of $T'_2(f)$ in $O(|T'_2(f)|)$ time. Then, we invoke Lemma~\ref{lem:topology} on $T'_2(f)$ with $\Lambda_g$ to obtain $T'_2(g)$. 
To summarize, constructing the trees $T_2(g)$ and $T'_2(g)$ for all three recursive subproblems takes $O(|L(f)|)$ time.

The three recursive calls will take care of finding the best edge $e'\in T_2$ for every edge $e \in T_1$ included in one of the recursive problems. It only remains to handle the three edges that do not belong to any of the recursive problems; the edge between $u_i$ and its light child, the edge $(u_{i-1},u_i)$, and the edge $(u_i,u_{i+1})$. For each such edge $e$ we describe a procedure that finds its best $e'\in T_2(f)$ in time $O(|T_2(f)|)$. 

Recall that, by definition of the bipartite problem, the best edge $e'$ for $e$ is the edge $e'$ of $T_2$ minimizing the cost of $e'$ plus the cost of all non-tree edges $(x,y)\in L$ with $x \in {T_1}_e$, and $y \in {T_2}_{e'}$. 
For the case where $e$ is the edge between $u_i$ and its light child, 
${T_1}_e = T_1(f)_e$. We therefore
mark all non-tree edges $(x,y)\in L(f)$, where $x$ is in $T_1(f)_e$. 
A non-efficient solution would work directly on $T_2$ by propagating, in a bottom up traversal of $T_2$, the cost of all marked edges so that, after the propagation, the cost of every edge $e'$ in $T_2$ has been increased by the total cost of all non-tree edges $(x,y) \in L$ with $x \in T_1(f)_e$, and with $y \in {T_2}_{e'}$. Then we can take the edge $e' \in T_2$ with minimum cost. However, this would take $O(|T_2|) = O(m)$, which is too slow.
Instead, we perform the propagation in $T_2(f)$.  Namely, in a bottom up traversal of $T_2(f)$, we propagate the cost of all marked edges so that, after the propagation, the cost of every edge $e'$ in $T_2(f)$ has been increased by the total cost of all non-tree edges $(x,y) \in L(f)$ with $x \in T_1(f)_e$, and with $y \in T_2(f)_{e'}$. This takes $O(|T_2(f)|) = O(|L(f)|)$ time. Since the propagation process affects all the edges $\tilde e$ with the same ${T_2}_{\tilde e} \cap \Lambda_f$ in the same way, the definition of topologically induced tree guarantees that the edges with minimum cost in $T_2$ and in $T_2(f)$ have the same cost, so using $T_2(f)$ instead of $T_2$ is correct.

The procedure for the cases where $e$ is the edge $(u_{i-1},u_i)$ or $(u_i,u_{i+1})$ is identical, except that we apply it with $T'_2(f)$ instead of $T_2(f)$. This difference stems from the fact that applying the above procedure on $T_2(f)$ only considers the costs of the non-tree edges in $L(f)$, but not the costs of the non-tree edges in $L_\downarrow(f)$, which might also cross cuts involving the edges $(u_{i-1},u_i)$ or $(u_i,u_{i+1})$. The definition of the costs of edges in $T'_2(f)$ takes into account the contribution of costs of non-tree edges in $L_\downarrow(f)$.
The rest of the propagation procedure and the proof of its correctness remain unchanged.

To analyze the overall running time, let $T(m)$ be the time to handle a fragment $f$ corresponding to a whole heavy path,
and $T'(m)$ be the time to handle a fragment $f$ corresponding to a proper subpath of some heavy path, where $m=|L(f)|$. Then
$T(m)=T'(m_{1})+T(m_{2})+T'(m_{3})$ for some $m_{1},m_{2},m_{3}$, where $m_{1}+m_{2}+m_{3}=m$ since the subproblems are disjoint,
$m_{1},m_{3}\leq m/2$ by the choice of the middle node, and $m_{2}\leq m/2$ by the definition of a heavy path. This is since the light child of $u_i$ does not have more incident non-tree edges in its subtree than the number of non-tree edges incident to the subtree of the heavy child $u_{i+1}$.  When $f$ corresponds to a whole heavy path the number of non-tree edges incident to the subtree of $u_{i+1}$ is exactly $m_3$.
Similarly, for the case where $f$ does not correspond to a whole heavy path, $T'(m) = T'(m_{1})+T(m_{2})+T'(m_{3})$ for some $m_{1},m_{2},m_{3}$, where $m_{1}+m_{2}+m_{3}=m$ and $m_{1},m_{3}\leq m/2$
(but now we cannot guarantee that $m_{2}\leq m/2$). Considering the tree describing the recursive calls, on any path from the root (corresponding
to the fragment consisting of the whole $T_{1}$) to a leaf, we have the property that the value of $m$ decreases by at least a factor of $2$ every two steps.
Hence, the depth of the recursion is $O(\log m)$. It follows that the total time to handle a bipartite problem of size $m$ is $O(m \log m)$.
\end{proof}

\section{A $\log\log n$ Speedup}

Karger modified his $O(m\log^{3}n)$ time minimum cut algorithm to work in
$O(m\log^{3}n/\log\log n+n\log^{6}n)$ time by observing that 1-respecting cuts can
be found faster than 2-respecting cuts, and tweaking the parameters of the tree packing.
In this section we explain how to apply this idea, together with our new $O(m\log n)$
time algorithm for the 2-respecting problem, to derive a new $O(m\log^{2}n/\log\log n+n\log^{3+\delta}n)$
time minimum cut algorithm, for any $\delta>0$.

As in Karger's implementation~\cite[Section 9.1]{Karger}, we start with an initial sampling step, except that instead of $\epsilon=\frac{1}{4\log n}$ we use
$\epsilon=\frac{1}{4\log^{\gamma}n}$, for some $\gamma\in (0,1)$ to be fixed later. This produces in linear time a graph $H$
with $m'=O(n/\epsilon^{2}\log n)$ edges and minimum cut $c'=O(\epsilon^{-2}\log n)$ such that the minimum cut in $G$ corresponds to a $(1+\epsilon)$-times minimum cut in $H$. We find a packing in $H$ of weight $c'/2$ with Gabow's algorithm~\cite{Gabow}
in $O(m'c'\log n)=O(n/\epsilon^{4}\log^{3}n)$ time.
We choose $4\log^{1+\gamma}n$ trees at random from the packing and for each tree we find the minimum cut
that 1-respects it. This takes total $O(m\log^{1+\gamma}n)$ time.
Then, we choose $\log n/\log(\log^{\gamma}n)=O(\log n/\log\log n)$ trees at random
from the packing and for each tree we find the minimum cut that 2-respects it. This takes total in $O(m\log^{2}n/\log\log n)$ time using our new algorithm.

Let $\rho \geq c'/2$ be the weight of the packing, let $\alpha\rho$ be the total weight of trees
that 1-respect the minimum cut,  and let $\beta\rho$ be the total weight of trees that 2-respect the minimum cut.
As observed by Karger, $\beta \geq 1-2\epsilon-2\alpha$.
Thus, either $\alpha > \frac{1}{4\log^{\gamma}n}$ and choosing $4\log^{1+\gamma}n$ trees
guarantees that none  of them 1-respects the minimum cut with probability at most
\[  (1-\alpha)^{4\log^{1+\gamma}n} \leq \left(1-\frac{1}{4\log^{\gamma}n}\right)^{4\log^{\gamma}n\cdot \log n} < 1/n, \]
or $\beta \geq 1-1/\log^{\gamma}n$ and choosing $\log n/\log(\log^{\gamma}n)$
trees guarantees that none of them 2-respects the minimum cut with probability
at most 
\[ (1-\beta)^{\log n/\log(\log^{\gamma}n)} \leq \left(\frac{1}{\log^{\gamma}n}\right)^{\log n/\log(\log^{\gamma}n)} = 1/n. \]

\noindent 
The overall complexity is $O(n/\epsilon^{4}\log^{3}n+m\log^{1+\gamma}n+m\log^{2}n/\log\log n)=
O(m\log^{2}n/\log\log n+n\log^{3+4\gamma}n)$. By adjusting $\gamma=\delta/4$ we obtain
the claimed complexity of $O(m\log^{2}n/\log\log n+n\log^{3+\delta}n)$.

\section*{Acknowledgements}
We thank Daniel Anderson and Guy Blelloch for drawing our attention to an inaccuracy in a prior version of Section~\ref{sec:descendant}.

\bibliographystyle{abbrv}

\appendix
\section{A constant factor approximation of the minimum cut}
\label{app:Matula}
Matula~\cite{Matula} gave an $O(m/\epsilon)$ time algorithm that finds a $(2+\epsilon)$ approximation of the minimum cut in an unweighted graph $G$.
The algorithm proceeds in iterations, where each iteration takes $O(m)$ time, and either finds a $(2+\epsilon)$ approximate cut, or produces a subgraph $G'$ that contains the minimum cut of $G$, but has only a constant fraction of the edges of $G$.  Hence there are $O(\log n)$ iterations, and the total running time is $O(m)$ for any fixed $\epsilon$. 

Matula's algorithm can be easily extended to the weighted setting. Each iteration can be implemented in $O(m\log n)$ time (cf.~\cite{KargerThesis}), and produces a subgraph $G'$ with a constant factor of the total edge weight of $G$. Thus, the algorithm produces a $(2+\epsilon)$ approximation of the minimum cut in a weighted graph $G$ in $O(m \log n \log W)$ time, where $W$ is the sum of edge weights in $G$.  

The running time can be decreased to $O(m \log^2 n)$ at the expense of a worse constant factor approximation as follows.
Let $G$ be a weighted graph. Let $c$ denote the weight of the minimum cut in $G$. 
Compute a maximum spanning tree $T$ of $G$, and let $w^*$ be the minimum weight of an edge in $T$. It is easy to see~\cite{Karger99} that $w^* \leq c \leq n^2 w^*$. 
Contract all edges $e$ with $w(e) > n^2 w^*$. Clearly, this does not affect the minimum cut.
If $w^* \leq n^3$, then all edge weights are now bounded by $n^5$, and we can run Matula's algorithm in $O(m \log^2 n)$ time, and obtain a $(2+\epsilon)$-approximate minimum cut.
Otherwise, set $\tilde w(e) \leftarrow \lfloor w(e)/\frac{w^*}{n^3} \rfloor$, and delete all edges with $\tilde w = 0$. Call the resulting graph $\tilde G$.
Observe that $\tilde G$ has integer edge weights bounded by $n^5$, so we can find a $(2+\epsilon)$-approximate minimum cut in $\tilde G$ in $O(m \log^2 n)$ time.
Now scale up the edge weights in $\tilde G$ by $\frac{w^*}{n^3}$. The weight of each edge in $\tilde G$ is now off from its original weight in $G$ by at most $\frac{w^*}{n^3} \leq \frac{c}{n^3}$. Thus, the weights of any cut in $\tilde G$ and in $G$ differ by at most $c/n$. Hence, an $(2+\epsilon)$-approximate minimum cut in $\tilde G$ is an $O(1)$-approximate minimum cut in $G$.        

\end{document}